\newtheorem{theorem}{Theorem}
\newtheorem{lemma}{Lemma}
\newtheorem{corollary}{Corollary}
\theoremstyle{definition}
\newtheorem{definition}{Definition}
\newtheorem{example}{Example}
\title{Fast Schulze Voting Using Quickselect}
\let\email\texttt
\author{Arushi Arora\thanks{Computer Science Department, University of California, Irvine; \email{arushia2@uci.edu}, \email{eppstein@uci.edu}, \email{randylh@uci.edu}. Research of David Eppstein was supported in part by NSF grant CCF-2212129.}
\and David Eppstein\footnotemark[1]
\and Randy Le Huynh\footnotemark[1]}
\begin{document}

\maketitle

\begin{abstract}
The Schulze voting method aggregates voter preference data using maxmin-weight graph paths, achieving the Condorcet property that a candidate who would win every head-to-head contest will also win the overall election. Once the voter preferences among $m$ candidates have been arranged into an $m\times m$ matrix of pairwise election outcomes, a previous algorithm of Sornat, Vassilevska Williams and Xu (EC '21) determines the Schulze winner in randomized expected time $O(m^2\log^4 m)$. We improve this to randomized expected time $O(m^2\log m)$ using a modified version of quickselect. 
\end{abstract}

\section{Introduction}
In ranked-choice voting, each voter provides a preference ordering of candidates rather than a single preferred candidate, and one of several methods may be used to aggregate these preferences and determine a winner. Prominent among these methods is one introduced by Markus Schulze in 2003 based on maxmin-weight paths in graphs, called the Schulze method or beatpath method. It has the advantage of being a \emph{Condorcet method}: if one candidate would be preferred by a majority of voters in a head-to-head contest against each other candidate, that preferred candidate wins~\cite{Sch-VM-03,Sch-SCW-11,Sch-18}. More generally, if a subset of candidates wins all head-to-head contests against other candidates, the Schulze winner will be in this subset. This property (the \emph{Smith criterion}~\cite{Smi-Econ-73}) distinguishes the Schulze method from instant-runoff voting, the Borda count, and other commonly used ranked-choice methods.

In the Schulze method, voter preference orderings may be incomplete, specifying only a voter's preferences among a subset of candidates, or grouping some candidates together into equally-preferred subsets. From these votes, one can determine, for each pair of candidates, the number of voters who prefer one candidate over the other, and the \emph{margin of victory} that the more-preferred candidate would hold over the less-preferred candidate in an election between only those two candidates. Schulze defines a weighted complete directed graph, in which the vertices are candidates, and an edge is directed from each pairwise winner to each pairwise loser, weighted by these margins of victory. In this graph, Schulze considers all pairwise maxmin-weight paths. The candidates can be partially ordered by the weights of these paths, considering one candidate to beat another when the maxmin weight  from the first to the second is  greater than the maxmin weight in the other direction. When the margins of victory are distinct, this partial order has a unique maximal element; the Schulze method chooses this candidate as the winner of the election, with a more complex tie-breaking procedure used in the rare case of a tie. A recent preprint of Schulze surveys the usage of the Schulze method in governments, political parties, and organizations including the IEEE and ACM~\cite{Sch-18}.

Determining the outcome of an election using the Schulze method has two separate algorithmic steps: aggregating individual ballots to convert voter preference orderings into a graph weighted by the margins of victory, and then determining a winner using maxmin-weight paths in this graph. In recent work, Sornat, Vassilevska Williams and Xu examined the fine-grained complexity of both subproblems, finding algorithms that (to within logarithmic factors) are optimal under standard complexity-theoretic assumptions~\cite{SorVasXu-EC-22}. Our work focuses on the second part, finding a winner from the weighted graph, for which we develop a simple, fast algorithm with fewer of those logarithmic factors.

The most obvious way to determine a Schulze winner would be to compute all-pairs maxmin-weight paths, use comparisons between path weights to order all candidates, and then search for a maximal element of the resulting partial order. The maxmin-weight path in a graph has been studied under many names, including the widest path, maximum capacity path, or bottleneck longest path. Maxmin-weight paths in undirected graphs are widely used for high-bandwidth network routing, and can be found between all pairs of vertices merely by finding a maximum spanning tree and using its paths. In directed graphs, the problem is not so simple, but it has also been well-studied and has multiple applications beyond voting.
Many standard graph shortest path algorithms can be adapted to find maxmin-weight paths, and initial reference implementations of the Schulze method were based on the Floyd--Warshall algorithm for all-pairs shortest paths~\cite{Sch-VM-03,Sch-SCW-11,Sch-18}. This textbook algorithm, published by Floyd in 1962 and closely related to earlier transitive closure algorithms by Roy and Warshall, takes time cubic in the number $m$ of candidates in the election~\cite{Flo-CACM-62,Roy-CR-59,War-JACM-62}.\footnote{We follow Sornat, Vassilevska Williams and Xu~\cite{SorVasXu-EC-22} in using $m$ for the number of candidates and $n$ for the number of voters in an election, despite the conflict between this notation and the usual conventions ($n$ for the number of voters and $m$ for the number of candidates) for graph algorithms in social choice theory and computational social choice communities.}  In 2009, Duan and Pettie~\cite{DuaPet-SODA-09} showed that all-pairs maxmin-weight paths can be computed by an algorithm based on fast matrix multiplication, taking time $O(m^{(3+\omega)/2})$ where $\omega$ is the exponent of fast matrix multiplication. As of 2024, $\omega\le 2.3716$ giving a time bound for all pairs maxmin-weight paths of $O(m^{2.6858})$~\cite{WilXuXu-SODA-24}.

However, as Sornat, Vassilevska Williams and Xu~\cite{SorVasXu-EC-22} observed, although computing all-pairs maxmin-weight paths is sufficient to determine the outcome of the Schulze method, it is not necessary. Instead, they found a faster algorithm for finding a Schulze winner (or winners, in case of ties). Their algorithm takes as input a weighted complete directed graph (representing the votes that would be cast for each candidate in each possible two-candidate election) and produces the Schulze method winner directly, in randomized expected time $O(m^2\log^4 m)$. The algorithm does not use maxmin-weight paths at all, instead using a complex decremental strong connectivity data structure of Bernstein, Probst, and Wulff-Nilsen~\cite{BerProWul-STOC-19}.
Our main result is a simpler, faster algorithm for determining the Schulze winner, taking randomized expected time $O(m^2\log m)$. Our algorithm is an adaptation of another classical algorithm, quickselect~\cite{Hoa-CACM-61,Dev-JCSS-84}, using single-source maxmin-weight paths in a subroutine that replaces the pivoting steps in quickselect. The single logarithmic factor in our time bound comes from the fact that these maxmin-weight path subroutine calls consider the whole given graph, even in later stages of the algorithm in which the field of candidates has been considerably narrowed down. As a subsidiary result, we show that this slowdown appears necessary for this algorithm: there exist inputs for which restricting these subroutine calls to a narrowed field of candidates would produce incorrect results. For the same reason, we cannot apply quicksort in place of quickselect to produce a near-quadratic-time Schulze ordering of all candidates.

Our $O(m^2\log m)$ time bound is essentially optimal up to its single remaining logarithmic factor, as the input to our algorithm has size $O(m^2)$. We should note that under a lower bound (conditional with respect to standard assumptions of fine-grained complexity) proved by Sornat, Vassilevska Williams and Xu~\cite{SorVasXu-EC-22}, the time for this step in determining a Schulze winner will be dominated by the time for converting voter preference orderings into a matrix of pairwise outcomes, which we do not speed up. Moreover, their lower bound holds for determining the Schulze winner from voter preference input, regardless of whether this conversion is made. Nevertheless, our improvement may have some practical applicability to other problems where the matrix of pairwise outcomes is already available, eliminating the need for conversion to a matrix. For example, it may be of interest to perform computational experiments on synthetic matrix data, or to analyze the outcome of a round-robin tournament for a high-scoring sport such as basketball by applying the Schulze method to the matrix of scores obtained by each team in each game.

\section{Preliminaries}
\label{sec:definitions}

\subsection{Top-heavy partial orders and quickselect}

\begin{definition}
A \emph{strict partial order} on a set $S$ is a binary relation $<$ on $S$ that obeys the following properties:
\begin{itemize}
\item Asymmetry: for every $x$ and $y$ in $S$, at most one of $x<y$ and $y<x$ is true. In particular, setting $y=x$, this implies irreflexivity: $x\not< x$.
\item Transitivity: for every $x$, $y$, and $z$  in $S$, if $x\le y$ and $y\le z$, then $x\le z$.
\end{itemize}
We say that a strict partial order is \emph{top-heavy} if it has a unique top element $t$, such that for all $x\in S$, $x\ne t$ implies $x<t$. We say that two elements are \emph{incomparable}, written as $x||y$, when neither $x<y$ nor $x>y$ is true.
\end{definition}

A \emph{linear extension} of a strict partial order is an arrangement of its elements into a sequence $x_0,x_1,\dots x_{n-1}$ such that, whenever $x_i<x_j$ in the partial order, we have $i<j$ in the sequence. In computer science terms, it is a topological ordering of the directed acyclic graph with a directed edge from the smaller element to the larger element for each pair of comparable elements of the partial order.

Our algorithm will use a version of quickselect, specialized to find the top element of a top-heavy strict partial order. It can be defined by the following steps:

\begin{algorithm}[H]
\caption{Quickselect for the top element of a top-heavy strict partial order}
\label{alg:quickselect-top-heavy-partial}
\begin{enumerate}
\item Choose a uniformly random element $p$ of the given set $S$ as a pivot.
\item By comparing each element to $p$, partition $S$ into the three subsets of elements $L=\{x: x<p\}$, $I=\{x:x||p\}$, and $H=\{x:p<x\}.$
\item If $H$ is non-empty, return the result of recursing into $H$.
\item Otherwise, $I$ must equal $\{p\}$, and $p$ must be the unique top element; return $p$.
\end{enumerate}
\end{algorithm}

Consider any fixed linear extension of $S$ (unknown to the algorithm). Then, in each pivoting step, the set $H$ of elements greater than the pivot $p$ must form a subset of the elements that follow $p$ in the linear extension. It follows that, regardless of the structure of the partial order $S$, the distribution of sizes of $H$ (depending on the random choice of pivot) is minorized by the distribution of sizes of the subsets that would be chosen when applying quickselect to a linear order.
Thus, if we could perform constant-time comparisons (not true in our application), the time for this algorithm would be linear, regardless of the partial order. However, our algorithm for Schulze voting will use a modified form of quickselect in which the individual comparisons of the pivoting step are replaced by a maxmin-weight path computation. This replacement will have significant effects on runtime because it will involve maxmin-weight paths in the entire input graph rather than being restricted to a recursive subproblem.

\subsection{The Schulze method}

Here we describe the Schulze method and some of its basic properties. We do not claim any originality for our observations about the method in this section. Although intended for use with preference ballots, and for the pairwise vote differentials in head-to-head contests between all pairs of candidates, the Schulze method does not require all voters to specify their preference between all pairs of candidates.

\begin{definition}
For a given system of candidates $M$, voters $N$, and strict partial order voter preferences, let $P(x,y)$ denote the number of voters who prefer candidate $x$ to candidate $y$. Define the \emph{weighted majority graph} to be a graph $G_{M,N,P}$ with $M$ as vertices, and with a directed edge from $x$ to $y$ for every pair $(x,y)$ of candidates with $x\ne y$. Label the edge from $x$ to $y$ with the weight $P(x,y)-P(y,x)$.
\end{definition}

$G_{M,N,P}$ is a complete directed graph with antisymmetric integer weights, but we will not use these properties, and these are the only properties that it has. Every weighted graph of this form can be realized as a graph $G_{M,N,P}$, for a large enough set $N$ of voters, even when requiring the voters to list a total order for all candidates (a requirement not made in the Schulze method), by adapting a method of McGarvey~\cite{McG-Econ-53} for constructing preference systems that produce arbitrary patterns of pairwise outcomes. To do this, choose a parameter $k$ such that $2k$ is at least the largest pairwise margin of victory, and construct a pool of $N=2k\tbinom{M}{2}$ voters. For each pair of candidates, choose $2k$ of these voters to prefer the two candidates as their top two choices, with the desired pairwise margin of victory. For $k$ of these voters, choose an arbitrary ordering of the remaining candidates, and for the other $k$ voters, use the reverse of the same ordering, so that the preferences for all other candidates cancel out.

\begin{definition}
For a weighted directed graph $G$ with a given set of candidates as its vertices, we define the \emph{beatpath strength} $B(x,y)$ of candidate $x$ against candidate $y$ as the minimum weight of an edge on a path in $G$ from $x$ to $y$, if such a path exists, with the path chosen to maximize this minimum weight. If $x=y$ we define the beatpath strength to be $B(x,x)=+\infty$ and if $x\ne y$ and no path from $x$ to $y$ exists we define the beatpath strength to be $B(x,y)=-\infty$.

Define the \emph{Schulze order} on the vertices of $G$ by the relation $<$, where for vertices $x$ and $y$ of $G$, we define $x < y$ whenever $B(x,y)< B(y,x)$.
\end{definition}

The original version of Schulze's method~\cite{Sch-VM-03} omits the negatively-weighted edges from $G_{M,N,P}$, but this merely complicates the definition of the Schulze order by making some values $B(x,y)$ undefined, without making any difference in the resulting Schulze order, as each two candidates have at least one non-negative path in one direction.

\begin{lemma}[Schulze~\cite{Sch-SCW-11}, Section 4.1]
The Schulze order is a strict partial order.
\end{lemma}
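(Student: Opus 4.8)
The plan is to verify the two defining axioms of a strict partial order---asymmetry and transitivity---for the relation defined by $x < y \iff B(x,y) < B(y,x)$. Asymmetry is immediate: the conditions $x<y$ and $y<x$ read $B(x,y)<B(y,x)$ and $B(y,x)<B(x,y)$ respectively, and these cannot hold simultaneously because $<$ on the extended reals $\mathbb{R}\cup\{\pm\infty\}$ is itself asymmetric. (Irreflexivity then follows by setting $y=x$.) So the entire content of the lemma lies in transitivity.

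Before attacking transitivity, I would isolate the one structural fact about beatpath strengths that makes everything work, namely the concatenation (or ``bottleneck triangle'') inequality
\[
B(x,z)\ \ge\ \min\bigl(B(x,y),\,B(y,z)\bigr)\qquad\text{for all vertices } x,y,z.
\]
Its justification is short: if either term on the right is $-\infty$ the inequality is vacuous, and if two of the arguments coincide it follows from the convention $B(\cdot,\cdot)=+\infty$; otherwise, take a maxmin-weight path from $x$ to $y$ and one from $y$ to $z$, concatenate them into a walk from $x$ to $z$, and observe that the walk's minimum edge weight equals $\min(B(x,y),B(y,z))$ while the underlying simple path uses a subset of these edges and so has minimum edge weight no smaller. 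Hence $B(x,z)$, being the maximum over all $x$--$z$ paths of the minimum edge weight, is at least this value.

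For transitivity, suppose $x<y$ and $y<z$, i.e.\ $B(y,x)>B(x,y)$ and $B(z,y)>B(y,z)$; I want $B(z,x)>B(x,z)$, i.e.\ $x<z$. Set $\sigma=\min\bigl(B(z,y),B(y,x)\bigr)$. Applying the concatenation inequality to the walk $z\to y\to x$ gives $B(z,x)\ge\sigma$ at once, so it suffices to show $\sigma>B(x,z)$. I would argue by contradiction: assume $B(x,z)\ge\sigma$, so that all three of $B(y,x)$, $B(z,y)$, $B(x,z)$ are $\ge\sigma$. Applying the inequality with start $x$, intermediate $z$, end $y$ yields $B(x,y)\ge\min\bigl(B(x,z),B(z,y)\bigr)\ge\sigma$, and with start $y$, intermediate $x$, end $z$ yields $B(y,z)\ge\min\bigl(B(y,x),B(x,z)\bigr)\ge\sigma$. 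Now I split on which term realizes $\sigma$: if $\sigma=B(y,x)$ then $B(x,y)\ge\sigma=B(y,x)$ contradicts $x<y$, while if $\sigma=B(z,y)$ then $B(y,z)\ge\sigma=B(z,y)$ contradicts $y<z$. Either way we reach a contradiction, so $B(x,z)<\sigma\le B(z,x)$, establishing $x<z$.

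I expect transitivity to be the only real obstacle, and within it the non-obvious move is the introduction of $\sigma$ together with the use of the concatenation inequality in the ``backward'' direction (along $x\to z\to y$ and $y\to x\to z$) to pit the derived lower bounds against the two strict hypotheses. The $\pm\infty$ conventions require only a quick sanity check and cause no genuine difficulty, since the hypotheses $x<y$ and $y<z$ already force $B(y,x)$ and $B(z,y)$ to exceed the quantities they dominate. Finally, the reflexive-closure (``$\le$'') phrasing of transitivity in the definition follows from the strict version above together with the trivial equality cases.
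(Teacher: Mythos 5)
Your proof is correct: the paper gives no proof of this lemma, citing Schulze instead, and your argument (asymmetry from the definition, plus transitivity via the bottleneck concatenation inequality $B(x,z)\ge\min(B(x,y),B(y,z))$ and a case split on which term realizes $\sigma$) is essentially the standard argument from the cited source. No gaps.
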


\begin{lemma}[Schulze~\cite{Sch-SCW-11}, Section 4.2.1]
\label{lem:uniqueness}
If the weights of the edges of a graph $G$ are all distinct, the Schulze order is top-heavy.
\end{lemma}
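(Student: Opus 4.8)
The plan is to prove the statement for any strongly connected weighted digraph with distinct edge weights, which covers the complete graph $G_{M,N,P}$ arising in the Schulze method; strong connectivity guarantees that every beatpath strength $B(x,y)$ is finite. The first move is to reinterpret beatpath strengths as reachability thresholds: for a real number $\theta$, let $G_{\ge\theta}$ denote the subgraph of edges of weight at least $\theta$; then $B(x,y)\ge\theta$ exactly when $y$ is reachable from $x$ in $G_{\ge\theta}$. Uniqueness of the top element, once existence is known, is immediate: if two elements $t$ and $t'$ each dominated all others then $t<t'$ and $t'<t$, contradicting asymmetry. So the entire content of the lemma is the \emph{existence} of a dominating element, and I would establish this by induction on the number of vertices.

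For the inductive step, define for each vertex $v$ the value $\mathrm{out}(v)=\min_{x\ne v}B(v,x)$, the largest threshold at which $v$ still reaches every other vertex, and set $\theta^\star=\max_v\mathrm{out}(v)$. Let $C$ be the set of vertices that reach all others in $G_{\ge\theta^\star}$. I would first check the structural facts that make $C$ behave like a \emph{top group}: it is nonempty (it contains the maximizer of $\mathrm{out}$); any two of its vertices are mutually reachable in $G_{\ge\theta^\star}$, so $C$ lies in a single strongly connected component; and in fact $C$ equals that component and is the unique source of the condensation of $G_{\ge\theta^\star}$, since any outside vertex with an edge of weight $\ge\theta^\star$ into $C$ would itself reach everything, contradicting its exclusion from $C$. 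Two consequences follow. First, no edge of weight $\ge\theta^\star$ enters $C$, so for $t\in C$ and $x\notin C$ we get $B(t,x)\ge\theta^\star>B(x,t)$; that is, every outside vertex lies below every vertex of $C$. Second, for $u,v\in C$ any beatpath that leaves $C$ must re-enter it along an edge of weight $<\theta^\star$ and is therefore dominated by an in-$C$ path, so $B$ restricted to $C$ agrees with the beatpath strengths of the induced subgraph $G[C]$, and the Schulze order on $C$ coincides with that of $G[C]$.

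These two consequences reduce the problem to $G[C]$: if $C$ is a proper subset of the vertex set, then $G[C]$ is a smaller strongly connected digraph with distinct weights, its unique top element $t$ (by induction) beats every other vertex of $C$, and the first consequence shows $t$ also beats everything outside $C$, making $t$ the dominating element of $G$. The \emph{main obstacle}, and the only place where the distinctness hypothesis is essential, is ruling out the degenerate case $C=V$ in which the recursion fails to shrink. Here I would argue that $C=V$ forces $G_{\ge\theta^\star}$ to be strongly connected while $G_{>\theta^\star}$ is not, since otherwise some $\mathrm{out}(v)$ would exceed $\theta^\star$; and because distinct weights mean exactly one edge $e^\star=(p,q)$ has weight $\theta^\star$, adding this single edge to $G_{>\theta^\star}$ must make it strongly connected. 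But an edge \emph{into} $q$ cannot enlarge the set of vertices reachable \emph{from} $q$, so $q$ would already reach all vertices in $G_{>\theta^\star}$, giving $\mathrm{out}(q)>\theta^\star$ and contradicting the maximality of $\theta^\star$. Thus $C\subsetneq V$, the induction goes through, and the one-vertex graph serves as the trivial base case.
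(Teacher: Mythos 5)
Your argument is correct, but note that the paper does not prove this lemma at all: it is imported as a citation to Schulze's own paper (Section 4.2.1), whose argument is along quite different lines --- Schulze shows that a tie $B(x,y)=B(y,x)$ between distinct candidates forces two pairwise margins to coincide, so distinct weights make the Schulze order a total order, which trivially has a unique maximum. Your route instead proves only what is needed (existence of a dominating element) by peeling off the ``top'' strongly connected component of $G_{\ge\theta^\star}$ at the critical threshold $\theta^\star=\max_v\min_{x\ne v}B(v,x)$ and recursing into it; the key steps all check out: $C$ is a source component, so every $x\notin C$ satisfies $B(t,x)\ge\theta^\star>B(x,t)$ for $t\in C$; beatpaths between vertices of $C$ never profitably leave $C$, so the induced order on $C$ is intrinsic to $G[C]$; and the degenerate case $C=V$ is excluded by the observation that the unique edge $(p,q)$ of weight $\theta^\star$ enters $q$ and hence cannot help $q$ reach anything, which is exactly where distinctness is used. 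Two small remarks. First, your standing hypothesis of strong connectivity is not in the lemma's literal statement, but it is in fact necessary (an edgeless graph on three vertices has vacuously distinct weights and no top element) and it holds automatically for the weighted majority graph, which is a complete digraph; it would be worth saying explicitly that this is the intended scope. Second, your proof yields strictly less than Schulze's (top-heaviness rather than totality of the order), which is all the paper uses, but readers comparing with the cited source should be aware the statements differ in strength.
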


When the Schulze order is top-heavy, the top element of the order is declared to be the Schulze winner. Sornat, Vassilevska Williams and Xu~\cite{SorVasXu-EC-22} describe an algorithm for finding this Schulze winner in expected time $O(m^2\log^4 m)$, based on the deletion of edges from the graph $G_{M,N,P}$ in ascending order by weight and the use of a data structure for decremental strongly connected components in dynamic graphs.

Although it is likely, with sufficiently many voters, that the weights are indeed distinct, equal weights can occur, and in this case there can be multiple maximal elements in the Schulze order. In such cases, Schulze~\cite{Sch-VM-03} proposes breaking the tie by using the preference ordering of a randomly chosen voter. Rather than implementing this tie-breaking method directly, Sornat, Vassilevska Williams and Xu~\cite{SorVasXu-EC-22} modify their decremental strongly connected component algorithm to find the set of all maximal elements.

\section{Algorithm and its complexity analysis}
We leverage quickselect to identify a Schulze winner without the need of an intermediate step of computing all pairs maxmin-weight paths, and without using decremental strongly connected components.

\subsection{Subroutines}
Although we do not use all pairs maxmin-weight paths, we do nevertheless compute some maxmin-weight paths in the graph $G_{M,N,P}$. For this we use a fast version of Dijkstra's algorithm optimized for dense graphs.

\begin{lemma}
\label{lem:dense-dijkstra}
Single-source or single-destination maxmin-weight paths in a dense graph with $m$ vertices can be computed for a single designated source or destination vertex in expected time $O(m^2)$.
\end{lemma}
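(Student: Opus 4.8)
The plan is to adapt Dijkstra's single-source shortest-path algorithm to the maxmin objective, using an array-based priority queue in place of a heap so that the density of the graph contributes no logarithmic factor. For a designated source $s$ I would compute $B(s,v)$ for all $v$ as follows: initialize the tentative width of $s$ to $+\infty$ and of every other vertex to $-\infty$, maintain a growing set of \emph{settled} vertices whose tentative widths are known to be final, and at each step extract an unsettled vertex $u$ of maximum tentative width, mark it settled, and relax each outgoing edge $(u,v)$ by the assignment $\mathrm{width}(v)\gets\max\bigl(\mathrm{width}(v),\,\min(\mathrm{width}(u),w(u,v))\bigr)$, where $w(u,v)$ is the edge weight. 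The single-destination problem reduces to this single-source problem by reversing the orientation of every edge and computing $B(v,t)$ as a single-source computation in the reversed graph.

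The key step is to verify that this greedy rule is correct, namely that the tentative width of $u$ already equals $B(s,u)$ at the moment $u$ is extracted. Every tentative width records the bottleneck of an actual $s$-to-$u$ path (or $-\infty$), so $\mathrm{width}(u)\le B(s,u)$ is immediate; the content is the reverse inequality. For this I would use the maxmin analogue of Dijkstra's cut argument: if some path from $s$ to $u$ had bottleneck strictly larger than $\mathrm{width}(u)$, then, letting $y$ be the first unsettled vertex on that path and $x$ its settled predecessor, the relaxation of $(x,y)$ performed when $x$ was settled would already have forced $\mathrm{width}(y)\ge\min\bigl(B(s,x),w(x,y)\bigr)>\mathrm{width}(u)$, contradicting the choice of $u$ as the unsettled vertex of maximum tentative width. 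I expect this correctness argument, rather than the running-time bookkeeping, to be the main obstacle, because the maxmin objective replaces the additive distances of ordinary Dijkstra by a $\min$ operation, and one must confirm that the monotonicity Dijkstra usually obtains from non-negative edge weights survives this substitution; here it survives because prefixing an edge can only decrease a path's bottleneck.

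For the running time I would charge the work to two buckets. Each of the $m$ extract-maximum operations scans the array of tentative widths in $O(m)$ time, for $O(m^2)$ in total, and each of the $\Theta(m^2)$ directed edges is relaxed exactly once, when its tail becomes settled, again for $O(m^2)$ in total. Since the array-based priority queue makes every operation deterministic, this yields a worst-case $O(m^2)$ bound, which in particular establishes the claimed expected $O(m^2)$ bound; no randomization is required by the subroutine itself.
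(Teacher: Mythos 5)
Your proposal is correct and matches the paper's proof essentially exactly: both run Dijkstra's algorithm with the maxmin objective and an array-based (linear-scan) priority structure to get $O(m)$ per extraction and $O(m^2)$ total, and both handle the single-destination case by reversing all edges. The only difference is that you spell out the cut-based correctness argument for the greedy extraction rule, which the paper leaves implicit with the phrase ``as is usual for Dijkstra's algorithm.''
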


\begin{proof}
For single-source maxmin-weight paths, we use Dijkstra's algorithm (with the minimum weight of an edge on a path as its priority rather than the sum of weights), without any priority queue data structure. Instead, we merely maintain the maximum priority found so far for a path to each unprocessed vertex, and in each step choose the next vertex $u$ to process by scanning all unprocessed vertices sequentially and choosing the one whose priority is maximum. Then, as is usual for Dijkstra's algorithm, we consider the paths formed by following one more edge $uv$ from the chosen vertex to each unprocessed vertex $v$, we compute the priority of each such path as the minimum of the priority of $u$ and the weight of edge $uv$, and we update the priority of $v$ to the maximum of its old value and the priority of the path. In pseudocode:

\begin{algorithm}[H]
\caption{Dense single-source maxmin-weight paths}
\label{alg:single-source-max-weight-paths}
\begin{enumerate}
\item Set the priority of every vertex to $-\infty$, and the priority of the source vertex to $+\infty$.
\item Flag each vertex as unprocessed, and initialize a list $U$ of unprocessed vertices.
\item While $U$ is non-empty:
\begin{enumerate}
\item Scan $U$ to find the maximum-priority vertex $u$; let its priority be $\beta$ (the bottleneck weight of the path from the source to $u$).
\item For each edge $uv$ with weight $w$, where $v$ is unprocessed, set the priority of $v$ to the maximum of its old priority with $\min(\beta,w)$.
\item Remove $u$ from $U$
\end{enumerate}
\end{enumerate}
\end{algorithm}

There are $m$ vertices to process, finding the next vertex to process takes time $O(m)$, and processing each vertex takes time $O(m)$, so the total time is $O(m^2)$ as claimed. For single-destination maxmin-weight paths, we apply the same algorithm to the graph obtained by reversing all edges in the given graph.
\end{proof}

Our main idea is that, when a vertex $p$ is selected as the pivot in quickselect, we can use two instances of this single-source or single-destination path computation to perform all comparisons in the Schulze method with respect to $p$.

\begin{lemma}
\label{lem:schulze-pivot}
Let $p$ be any vertex of $G_{M,N,P}$, and let $S$ be any set of vertices. Then in time $O(m^2)$ we can determine, for each vertex $v$ in $S$, whether $v<p$, $v||p$, or $p<v$.
\end{lemma}

\begin{proof}
We use the following algorithm:

\begin{algorithm}[H]
\caption{Pivoting on a single vertex $p$}
\label{alg:single-pivot}
\begin{enumerate}
\item Apply \cref{alg:single-source-max-weight-paths} to $G_{M,N,P}$, using a single-source computation from $p$, to compute each beatpath weight $B(p,v)$.
\item Apply \cref{alg:single-source-max-weight-paths} to to $G_{M,N,P}$ again, using a single-destination computation to $p$, to compute each beatpath weight $B(v,p)$.
\item Partition the candidates into the three sets $L=\{v: B(v,p) < B(p,v)\}$, $I=\{v: B(v,p) = B(p,v)\}$, and $H=\{v: B(v,p) > B(p,v)\}$, and return these three sets.
\end{enumerate}
\end{algorithm}

The fact that these three sets are the sets of elements less than, incomparable to, and greater than the pivot $p$ in the Schulze order follows immediately from the definition of the order. The time bound follows from \cref{lem:dense-dijkstra}.
\end{proof}

\subsection{The main algorithm}

Our main algorithm adapts our quickselect-based algorithm for a maximal element in any top-heavy partial order (\cref{alg:quickselect-schulze}) to use this pivoting algorithm, and then checks for uniqueness. It is convenient to use an iterative version of quickselect rather than a recursive version, but this makes little difference to the algorithm and its correctness.

\begin{algorithm}[H]
\caption{Quickselect for the top element of the Schulze order}
\label{alg:quickselect-schulze}
\begin{enumerate}
\item Let $S$ be the set of all vertices in $G_{M,N,P}$.
\item While $|S| > 1$, do the following steps:
\begin{enumerate}
\item Choose a uniformly random element $p$ of the given set $S$ as a pivot.
\item Use \cref{alg:single-pivot} to partition $S$ into the three subsets of elements $L=\{x: x<p\}$, $I=\{x:x||p\}$, and $H=\{x:p<x\}.$
\item If $H$ is non-empty, set $S=H$; otherwise, set $S=\{p\}$.
\end{enumerate}
\item Let $p$ be the unique member of $S$, guaranteed to be a maximal element of the Schulze order.
\item To test whether $p$ is the unique maximal element, use \cref{lem:schulze-pivot} again to partition $S$ into the three subsets of elements $L=\{x: x<p\}$, $I=\{x:x||p\}$, and $H=\{x:p<x\}.$ $H$ is guaranteed to be empty; the order is top-heavy with $p$ as the unique winner if and only if $I$ is a singleton set.
\end{enumerate}
\end{algorithm}

In the case that the order is not top-heavy, the maximal elements of the Schulze order are exactly the maximal elements of its restriction to the final set $I$ from step 4 of the algorithm. Our algorithm does not determine which elements of $I$ are maximal. There must be at least two of these maximal elements, $p$ and any maximal element of $I\setminus\{p\}$.  It is necessary to recompute $I$ in this step rather than re-using the last such set computed in step 2(b), because this final set may include elements discarded in earlier iterations of the algorithm.

\begin{lemma}
\cref{alg:quickselect-schulze} finds a maximal element of the Schulze order and correctly determines whether it is the unique maximal element.
\end{lemma}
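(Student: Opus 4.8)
The plan is to separate the statement into two parts: (a) the element returned in step 3 is a maximal element of the Schulze order, and (b) the singleton test in step 4 correctly decides whether that element is the unique maximal element. Both parts depend on the precise shape of the working set $S$, so I would first establish, by induction on the loop iterations, the invariant that at the start of each iteration $S$ equals the up-set $\{x : q < x\}$ of the previous pivot $q$ in the full Schulze order (with $S$ the entire vertex set in the first iteration). The inductive step combines the definition $H = \{x \in S : p < x\}$ for the current pivot $p$ with the hypothesis $S = \{x : q < x\}$ and the observation that $p \in S$ forces $q < p$; by transitivity the two conditions $q < x$ and $p < x$ collapse to $p < x$, so the new set $S = H$ is again an up-set $\{x : p < x\}$ computed over all vertices. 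A useful consequence is that the successive pivots form a strictly increasing chain.

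Given this invariant, termination is immediate, since the pivot always lands in $I$ and never in $H$, so each non-terminal iteration strictly shrinks $S$. For maximality, let $w$ be the element present when the loop exits. Any candidate $z$ with $w < z$ would also exceed the pivot that defines the current up-set, hence would belong to $S$; when the loop exits because $S$ has a single element this forces $z = w$, and when it exits because the current $H$ was found empty it places $z$ in that empty set. In both cases no $z$ exceeds $w$, so $w$ is maximal.

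For part (b), I would first record the elementary fact, which uses only finiteness of the vertex set, that a strict partial order has a unique maximal element exactly when it is top-heavy: a sole maximal element must dominate every other vertex, because in a finite order every vertex lies below some maximal element. With this equivalence the test reduces to analyzing the partition of all vertices against the maximal element $w$. Since $w$ is maximal the set $H$ is empty, so $I = \{w\}$ holds iff every other vertex lies in $L$, i.e. is strictly below $w$, which is exactly top-heaviness and hence uniqueness. Conversely, any vertex $v \ne w$ in $I$ is incomparable to $w$, so $w$ is not a top element, and climbing from $v$ to a maximal element produces a second maximal element, distinct from $w$ because $v \not\le w$.

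The step I expect to be the main obstacle is keeping straight the difference between the partition restricted to the current $S$ used inside the loop and the partition of the full vertex set used in step 4. Discarding the elements of $L$ and $I$ during the loop is harmless for locating a maximal element, as the up-set invariant guarantees that the survivor stays globally maximal; but a discarded element may still be incomparable to $w$ and therefore witness non-uniqueness. This is precisely why step 4 must recompute against all vertices rather than reuse the last partition from the loop, and it is the point at which the invariant has to be invoked with care.
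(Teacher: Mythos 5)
Your proof is correct and follows essentially the same route as the paper's: an inductive invariant that $S$ is the up-set of the chain of pivots chosen so far, from which maximality of the surviving element follows, together with the observation that any candidate incomparable to that element lies below some other maximal element, so the singleton test on the full vertex set correctly decides uniqueness. You are somewhat more explicit than the paper about the exit case $|H|=1$ and about why step 4 must repartition all candidates rather than reuse the loop's last partition, but the substance is identical.
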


\begin{proof}
By induction on the number of steps of the while loop, in each step the sequence of already-chosen pivots are all comparable in the Schulze order, and $S$ consists of the elements that are greater than all of these pivots. Thus, in the final iteration of the loop, when $H$ becomes empty, there are no elements greater than the final chosen pivot $p$, so $p$ is maximal.

If $p$ is the unique maximal element, then there can be no other element greater than it or incomparable to it,  the sets $H$ and $I$ found in step 4 will be empty and the singleton $\{p\}$ respectively, and the algorithm will correctly report that $p$ is unique. In the other direction, if $p$ is maximal but not the unique maximal element, then any other maximal element will be incomparable with $p$, it will be included in the set $I$ found in step 4, and the algorithm will correctly report that $p$ is not the unique maximal element.
\end{proof}

In \cref{alg:quickselect-schulze}, the main contributor to the total time is the call to \cref{alg:single-pivot} in the inner loop, which takes time $O(m^2)$ according to \cref{lem:schulze-pivot}. Thus, to analyze the algorithm we mainly need to determine how many times this loop iterates. In the next two subsections we determine both the expected runtime of \cref{alg:quickselect-schulze} and high-probability bounds on its runtime, showing a time of $O(m^2\log m)$ in both cases.

\subsection{Expected time}

In both this and the next section our analysis of \cref{alg:quickselect-schulze} is more or less the standard analysis of the usual quickselect algorithm, with the exception that our emphasis is on the number of rounds of iteration made by the algorithm rather than its number of comparisons. This change of emphasis corresponds to the fact that each round takes complexity linear in the size of the entire weighted majority graph, rather than (as in standard quickselect) becoming faster as the number of remaining candidates becomes smaller.

Let $T(m)$ denote the expected runtime of \cref{alg:quickselect-schulze} on a worst-case input of size $m$, and let $R(m)$ denote the expected number of iterations of the while loop of the algorithm, again on a worst-case input of size $m$. Each iteration takes time $O(m^2)$ by \cref{lem:schulze-pivot}, and the work performed outside of the while loop is also $O(m^2)$ for the same reason; thus, $T(m)=R(m) * m^2$. In the analysis of this section, we show that $R(m)=O(\log m)$ and therefore that $T(m)=O(m^2\log m)$. This would also follow from our high-probability analysis but for expected time we can obtain more precise bounds on the number of iterations.

\begin{lemma}
\label{lem:iter-size}
Let $S$ have size $s$ before the start of an iteration of the while loop of the algorithm, and let $S'$ be the new value of $S$ after the iteration. Then, for each possible size $i$ of $S'$ (with $1\le i < s$),
\[\Pr[|S'|\le i]\ge\frac{(i+1)}{s}.\]
\end{lemma}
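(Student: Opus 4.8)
The plan is to reduce the statement about the partial-order behavior of quickselect to the corresponding statement about quickselect on a linear order, using the minorization observation already established in the excerpt just after \cref{alg:quickselect-top-heavy-partial}. Recall that for any fixed (but unknown) linear extension of the Schulze order, the set $H$ of elements strictly greater than the pivot $p$ is contained in the set of elements that follow $p$ in the linear extension. Since $S'$ equals $H$ whenever $H$ is nonempty (and otherwise $|S'|=1 \le i$ trivially for all $i\ge 1$), bounding $|S'|$ from above reduces to bounding the rank of the pivot $p$ within the linear extension of $S$.

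First I would fix an arbitrary linear extension of the restriction of the Schulze order to the current set $S$, and index the $s$ elements of $S$ as $y_0 < y_1 < \cdots < y_{s-1}$ along this extension. If the pivot chosen is $y_j$, then every element of $H$ must appear strictly after $y_j$, so $|H| \le s - 1 - j$; consequently $|S'| \le \max(1, s-1-j)$. Thus the event $|S'| \le i$ is implied by the event $s-1-j \le i$, i.e.\ $j \ge s-1-i$. The pivot is chosen uniformly at random from the $s$ elements, so each index $j$ is equally likely with probability $1/s$, and the number of indices $j$ with $j \ge s-1-i$ is exactly $i+1$ (namely $j \in \{s-1-i, s-i, \dots, s-1\}$).

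Combining these, I would write
\[
\Pr[\,|S'| \le i\,] \;\ge\; \Pr[\,j \ge s-1-i\,] \;=\; \frac{i+1}{s},
\]
which is precisely the claimed bound. The inequality here is the crucial point: $|S'|\le i$ can hold for additional pivot choices as well (because the true $H$ may be a proper subset of the linear-extension successors, and because comparabilities in the partial order only shrink $H$ further), so we only get a lower bound on the probability, exactly as the lemma states.

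The only real subtlety — and the step I would treat most carefully — is the boundary handling when $H$ is empty. In that case the algorithm sets $S'=\{p\}$, so $|S'|=1$; since the lemma restricts attention to $i \ge 1$, this case satisfies $|S'|\le i$ automatically and is already subsumed by taking $\max(1,\,s-1-j)$ above, so it does not weaken the bound. I would also note explicitly that we are using the minorization already justified in the excerpt rather than reproving it: the distribution of $|H|$ (hence of $|S'|$) is dominated by the distribution arising from uniform pivoting on a linear order of size $s$, and the displayed bound is just the CDF of that dominating linear-order distribution evaluated at $i$.
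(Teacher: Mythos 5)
Your proof is correct and follows essentially the same argument as the paper's: fix a linear extension, observe that $H$ is contained in the elements above the pivot in that extension, and note that choosing the pivot among the top $i+1$ elements (probability $(i+1)/s$) forces $|S'|\le i$, with the empty-$H$ case handled by $|S'|=1\le i$. The only difference is the indexing convention (you index from smallest to largest where the paper indexes from largest to smallest), which is immaterial.
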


\begin{proof}
Fix any linear extension of the Schulze order, and number the elements of $S$ as $x_1,x_2,\dots x_s$ from larger to smaller in this extension ordering. Then if pivot $x_j$ is chosen, then either $S'=\{x_j\}$ (with size one) or $S'$ contains only (a subset of the) elements $x_k$ with $k<j$ (with size at most $j-1$). Thus, whenever $j\le i+1$ the size of $S'$ will be at most $i$, and this choice of $j$ happens with probability $(i+1)/s$.
\end{proof}

\begin{corollary}
\label{cor:recurrence}
$R(m)$ obeys the recurrence
\[
R(m)= 1 +\frac1m  \sum_{i=1}^m R(i-1)
\]
with base case $R(0)=R(1)=0$.
\end{corollary}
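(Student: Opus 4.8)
The plan is to recognize the claimed formula as the ordinary quickselect recurrence, reinterpreted so that each pivoting round (rather than each comparison) contributes one unit. First I would set up the recursive accounting. For a fixed input of size $m$ let $f$ denote its expected number of while-loop iterations; one iteration always runs, after which the algorithm continues on the child set $S'$ and performs a further expected $f(S')$ iterations, so $f = 1 + \mathbb{E}[f(S')]$ where the expectation is over the uniform pivot and $f(S')=0$ as soon as $|S'|\le 1$. The two base cases are immediate: for $m\le 1$ the guard $|S|>1$ fails, the loop never executes, and so $R(0)=R(1)=0$. (Note that the recurrence itself applies only for $m\ge 2$: at $m=1$ it would spuriously give $1+R(0)=1$, which is why the base cases are stated separately.) With $R(m)=\max_{|I|=m} f(I)$, everything reduces to showing $\mathbb{E}[f(S')]$ is maximized, over all inputs, by the value $\tfrac1m\sum_{i=1}^m R(i-1)$.

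Next I would feed in \cref{lem:iter-size}. Numbering the current set $x_1,\dots,x_s$ from largest to smallest in a fixed linear extension, choosing pivot $x_j$ forces the child $S'$ to have size $1$ when $j=1$ and size at most $j-1$ when $j\ge 2$; the lemma packages this as $\Pr[|S'|\le i]\ge (i+1)/s$, which is exactly the corresponding probability for quickselect on a true linear order. Hence the size of $S'$ is stochastically dominated by the child size produced by a linear order of the same size. Because the recurrence-defined sequence is nondecreasing (a one-line induction: its consecutive differences work out to $1/(m+1)>0$), this domination lifts to $\mathbb{E}[f(S')]\le \tfrac1m\sum_{j=1}^m R(\ell_j)$ with $\ell_j=\max(j-1,1)$, and the linear order attains each worst-case child size $\ell_j$ exactly, so it is the maximizing input and the inequality becomes an equality. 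I would make both halves precise by a single strong induction on $m$: assume $f$ never exceeds $R$ at smaller sizes, combine with the domination to get the upper bound, and verify directly that the linear order (whose restrictions are again linear orders) meets it.

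Finally I would carry out the reindexing that rewrites $\sum_{j=1}^m R(\ell_j)$ as $\sum_{i=1}^m R(i-1)$. The positions $j=1$ and $j=2$ both yield $\ell_j\le 1$ and hence contribute $R(1)=0$, while $j=3,\dots,m$ contribute $R(2),\dots,R(m-1)$; this sum equals $R(0)+R(1)+R(2)+\cdots+R(m-1)=\sum_{i=1}^m R(i-1)$ precisely because $R(0)=R(1)=0$ absorb the collision at the two smallest positions. I expect the main obstacle to be the middle step rather than this bookkeeping: \cref{lem:iter-size} supplies only a one-sided bound on the cumulative distribution of $|S'|$, so the care lies in arguing that combining this stochastic domination with the monotonicity of $R$ genuinely converts the per-pivot size inequality into an equality for the worst case, with the linear order serving simultaneously as the dominating distribution and as the witness that the bound is tight.
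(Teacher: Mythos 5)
Your proposal is correct and follows essentially the same route as the paper: invoke \cref{lem:iter-size} to stochastically dominate the child size by that of a linear order, use monotonicity of the recurrence-defined sequence to turn this into an upper bound on the expected iteration count, and observe that a linear Schulze order (whose restrictions are again linear) attains the bound, with the collision of the two smallest pivot positions absorbed by $R(0)=R(1)=0$. Your version is somewhat more careful than the paper's (explicit strong induction and an explicit proof of monotonicity via the difference $1/(m+1)$, where the paper just says ``obviously monotonic''), but the argument is the same.
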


\begin{proof}
The base case follows as while loop will immediately stop iterating when $S$ has fewer than two elements.
Because $R(m)$ is obviously monotonic in $m$, the worst case size distribution for the size after each iteration, obeying \cref{lem:iter-size}, is the distribution in which size $1$ has probability $2/s$ and each other size has probability $1/s$.
The recurrence of the corollary describes exactly this worst-case size distribution after the trivial substitution of size $0$ and $1$ with probability $1/s$ each in place of size $1$ with probability $2/s$.

We have equality for the recurrence rather than inequality, because this worst case distribution can be achieved, for each iteration of the algorithm, when the input Schulze ordering is a linear ordering. For instance this would be true when there is a single voter with that ordering as their preference.
\end{proof}

\begin{theorem}
$R(m)=\ln m + O(1)$ and $T(m)=O(m^2\log m)$.
\end{theorem}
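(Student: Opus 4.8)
The plan is to solve the recurrence from \cref{cor:recurrence} exactly, or tightly enough to extract the $\ln m + O(1)$ leading term. I would begin by using the standard telescoping trick for this style of quickselect recurrence. Writing the recurrence as $m R(m) = m + \sum_{i=1}^{m} R(i-1)$ and subtracting the corresponding identity for $m-1$, namely $(m-1)R(m-1) = (m-1) + \sum_{i=1}^{m-1} R(i-1)$, the summations collapse. The difference of the sums is just the single term $R(m-1)$, so I obtain
\[
m R(m) - (m-1) R(m-1) = 1 + R(m-1),
\]
which simplifies to $m R(m) = m R(m-1) + 1$, i.e.\ $R(m) = R(m-1) + \tfrac{1}{m}$.

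From this first-order recurrence the solution is immediate by unrolling down to the base case $R(1) = 0$: I get $R(m) = \sum_{k=2}^{m} \tfrac{1}{k} = H_m - 1$, where $H_m$ is the $m$-th harmonic number. The well-known asymptotic $H_m = \ln m + \gamma + O(1/m)$ then yields $R(m) = \ln m + (\gamma - 1) + O(1/m) = \ln m + O(1)$, which is exactly the claimed bound on the expected number of while-loop iterations.

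For the runtime bound, I would simply combine this with the relation established just before \cref{lem:iter-size}, namely $T(m) = R(m)\cdot m^2$ (up to the constant hidden in the per-iteration $O(m^2)$ cost, and accounting for the $O(m^2)$ work outside the loop). Substituting $R(m) = \ln m + O(1)$ gives $T(m) = m^2(\ln m + O(1)) = O(m^2 \log m)$, completing the proof.

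I do not anticipate a genuine obstacle here: the recurrence is the classic harmonic-number recurrence and the telescoping is routine. The only point requiring a little care is handling the base cases $R(0) = R(1) = 0$ correctly when performing the subtraction of consecutive instances of the recurrence (the subtraction is valid for $m \ge 2$, and one should check that the derived relation $R(2) = R(1) + \tfrac12 = \tfrac12$ is consistent with evaluating the original recurrence directly). Beyond that, it is worth remarking that because \cref{cor:recurrence} is stated as an exact equality rather than an inequality, the bound $R(m) = H_m - 1$ is tight and not merely an upper bound, so the $\ln m$ leading term is the true expected iteration count and cannot be improved.
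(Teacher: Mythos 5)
Your approach is sound and genuinely different from the paper's: you telescope the recurrence directly, whereas the paper introduces an auxiliary quantity $\rho$ satisfying the same recurrence with base case $\rho(1)=1$, proves $\rho(m)=H_m$ by induction, and then argues informally that lowering the base case to $R(1)=0$ decreases the value by less than one unit. Your telescoping route is cleaner and, once executed correctly, gives an exact closed form rather than the paper's slightly hand-wavy ``the decrease is less than one'' argument. Both yield $R(m)=\ln m+O(1)$ and hence $T(m)=O(m^2\log m)$.

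However, there is a concrete error in your handling of the base case, precisely at the point you flag as needing care. The subtraction of consecutive instances requires the recurrence of \cref{cor:recurrence} to hold at both $m$ and $m-1$, so the derived relation $R(m)=R(m-1)+\tfrac1m$ is valid only for $m\ge 3$; at $m=2$ the value $R(1)=0$ is a base case, not an instance of the recurrence (the recurrence at $m=1$ would give $1+R(0)=1$, not $0$). Direct evaluation gives $R(2)=1+\tfrac12\bigl(R(0)+R(1)\bigr)=1$, which is \emph{not} consistent with your claimed $R(2)=\tfrac12$. Unrolling from the correct anchor $R(2)=1$ gives
\[
R(m)=1+\sum_{k=3}^m\frac1k=H_m-\frac12,
\]
not $H_m-1$. (Check: $R(3)=1+\tfrac13\cdot 1=\tfrac43=H_3-\tfrac12$.) This is an error only in the additive constant, so your asymptotic conclusions $R(m)=\ln m+O(1)$ and $T(m)=O(m^2\log m)$ survive unchanged, and your closing remark that the leading $\ln m$ term is exact remains valid; but the exact formula and the consistency check as written are wrong and should be corrected.
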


\begin{proof}
Let $\rho(m)$ obey the same recurrence as \cref{cor:recurrence}, with a changed base case $\rho(0)=0$, $\rho(1)=1$.
Then by induction on $m$, $\rho(m)=\sum_{i=1}^m\tfrac1i$, a harmonic number. To see this, use the induction hypothesis to expand and regroup
\begin{align*}
\rho(m) &= 1+\frac1m\sum_{i=1}^{m-1}\sum_{j=1}^i\frac1j\\
&= 1 + \frac1m\left(1+\sum_{j=2}^{m-1} \left(1+(m-j)\frac1j\right)\right)\\
&= 1 + \frac1m\left(1+\sum_{j=2}^{m-1} \frac{m}{j}\right)\\
&= 1 + \frac1m + \sum_{j=2}^{m-1} \frac1j.\\
\end{align*}

As is well known, the $m$th harmonic number is $\ln m+O(1)$.
The change of the base case decreases the overall value of $R(m)$ from the harmonic numbers but the amount of decrease is less than the change by one unit that would occur when decreasing both base cases by one.
\end{proof}

\subsection{High probability analysis}

Define an iteration of the while loop of \cref{alg:quickselect-schulze} to be a \emph{halving iteration} if, after the iteration, $S$ has decreased to half its former size or smaller. Obviously, the algorithm will terminate after at most $\log_2 m$ halving iterations. By \cref{lem:iter-size}, each iteration has probability at least $1/2$ of being a halving iteration, independently of all previous iterations. We may thus model the process of \cref{alg:quickselect-schulze} by a sequence of random coin flips, stopping when $\log_2 m$ heads are reached. The actual algorithm may stop earlier (because the algorithm can terminate with fewer than $\log_2 m$ halving iterations) but any valid high-probability upper bound on the number of coin flips until stopping will be a valid bound on the number of iterations of the algorithm.

\begin{theorem}
For every $c>0$ there exists $\kappa>0$ such that with probability $\ge 1-\frac1{m^c}$ the algorithm terminates after at most $\kappa\log_2 n$ iterations. Thus, with high probability (polynomially close to one) it takes time $O(m^2\log n)$.
\end{theorem}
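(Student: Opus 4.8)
The plan is to use the coin-flipping model set up immediately before the statement. By \cref{lem:iter-size}, applied with $i=\lfloor s/2\rfloor$, an iteration starting from a set of size $s$ reduces $|S|$ to at most $s/2$ with probability at least $1/2$, and this holds conditioned on any history; moreover the loop must stop once $\log_2 m$ such halving iterations have occurred. So I would fix a number of iterations $t=\kappa\log_2 m$, let $X_t$ count the halving iterations among the first $t$, and observe that the algorithm fails to terminate within $t$ iterations only if $X_t<\log_2 m$. The whole task then reduces to bounding $\Pr[X_t<\log_2 m]$ by $m^{-c}$ for a suitable constant $\kappa$.

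The first real step is to make the domination precise. Since each halving event has conditional probability at least $1/2$ given the past, a one-line coupling lets me replace $X_t$ by a sum $Y_t=\sum_{j=1}^t Z_j$ of $t$ independent fair (success probability exactly $1/2$) indicators with $X_t\ge Y_t$ almost surely. Because the lower tail we care about only grows as the success probability shrinks, success probability $1/2$ is the worst case, so it suffices to bound $\Pr[Y_t<\log_2 m]$, where $\mathbb{E}[Y_t]=t/2=(\kappa/2)\log_2 m=:\mu$.

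The main computation is a Chernoff lower-tail bound. For $\kappa>2$ I would write the threshold as $\log_2 m=(2/\kappa)\mu=(1-\delta)\mu$ with $\delta=1-2/\kappa\in(0,1)$, and apply $\Pr[Y_t\le(1-\delta)\mu]\le\exp(-\delta^2\mu/2)$. Substituting $\mu=(\kappa/2)\log_2 m$ gives a failure probability of the form $\exp(-(1-2/\kappa)^2(\kappa/4)\log_2 m)=m^{-\Theta(\kappa)}$; since the exponent grows linearly in $\kappa$, for any fixed $c$ I can choose $\kappa$ large enough that this is at most $m^{-c}$. Finally, since each iteration runs in $O(m^2)$ time by \cref{lem:schulze-pivot} and the work outside the loop is also $O(m^2)$, at most $\kappa\log_2 m$ iterations yield total time $O(m^2\log m)$ with probability $\ge 1-m^{-c}$.

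I expect the only genuine subtlety to be the domination argument: \cref{lem:iter-size} provides a per-step conditional guarantee rather than literal mutual independence, so the reduction must be phrased as stochastic dominance by i.i.d.\ fair coins rather than by asserting that the halving events are independent. Once that coupling is in place, the remainder is the textbook high-probability analysis of quickselect, with the Chernoff calculation being entirely routine.
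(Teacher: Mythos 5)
Your proposal is correct and follows essentially the same route as the paper: model the iterations as coin flips with heads probability at least $1/2$ via \cref{lem:iter-size}, reduce to needing $\log_2 m$ heads among $\kappa\log_2 m$ flips, and apply a Chernoff bound with $\delta=1-2/\kappa$, choosing $\kappa$ large enough for any target exponent $c$. The only differences are cosmetic --- you bound the lower tail of the number of heads where the paper bounds the upper tail of the number of tails, and you spell out the stochastic-dominance coupling that the paper leaves implicit in its ``minorized'' remark --- so this matches the paper's argument.
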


\begin{proof}
We use a standard Chernoff bound, in the form that for a sum $X$ of random $0$-$1$ variables with expected value $\mu$,
$$\Pr[X \geq (1+\delta)\mu] \leq \bigg(\frac{e^{\delta}}{(1+\delta)^{1+\delta}}\bigg)^{\mu}$$
for any chosen parameter value $\delta>0$.

Here, we let $X$ be the number of \emph{tails} of the random coin-flip process discussed above, in which we aim to get at least $\log_2 m$ heads. If the process consists of $\kappa\log_2 m$ trials, for a parameter $\kappa$ to be determined,
we can calculate the expected number of tails as $\mu=\tfrac12\kappa\log_2 m$. The process fails to get enough coin flips when
\[X>(\kappa-1)\log_2 m= (1+\delta)\mu,\]
where to achieve the equality in this equation we set $\delta = 1-2/\kappa$.

For large values of $\kappa$, we will have $\delta$ close to one, and failure probability upper-bounded by something close to $(e/4)^\mu$. For instance, whenever $\kappa\ge 10$ we will have
\[\frac{e^{\delta}}{(1+\delta)^{1+\delta}} < 0.7726 \]
and failure probability upper-bounded by
\[ 0.7726^{\tfrac12\kappa\log_2 m} = m^{(\tfrac12\kappa)\log_2 0.7726} < m^{-0.186\kappa}. \]
By setting $\kappa=\max\{10,c/0.186\}$ we can make the failure probability be smaller than any polynomial bound $1/m^c$, as desired.
\end{proof}

\subsection{All maximal elements}
\label{sec:all-maximal}

Although multiple maximal elements are unlikely by \cref{lem:uniqueness}, they can occur when multiple edges of the weighted majority graph have equal weight. As an extreme case, all weights can be zero and all candidates tied. \cref{alg:quickselect-schulze} will detect these situations but will not list all maximal elements of the Schulze order; the set of elements incomparable with its chosen maximal element may be a strict superset of the maximal elements. Sornat, Vassilevska Williams and Xu describe how to modify their algorithm to obtain all maximal elements in randomized expected time $O(m^2\log^4 m)$, the same as their time bound for finding a single maximal element~\cite{SorVasXu-EC-22}. While we do not achieve the same $O(m^2\log m)$ time bound as we do for \cref{alg:quickselect-schulze}, we can find all maximal elements more quickly when (as is likely) there are few of them:

\begin{theorem}
\label{thm:all-maximal}
It is possible to find all maximal elements of the Schulze order in expected time $O(km^2\log m)$, where $k$ is the number of maximal elements found.
\end{theorem}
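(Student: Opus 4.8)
The plan is to find the maximal elements one at a time, peeling off each with a single run of the quickselect machinery of \cref{alg:quickselect-schulze} and then discarding it together with everything comparable to it. Concretely, I would maintain a working set $S$, initialized to all vertices of $G_{M,N,P}$, together with a list of already-reported maximal elements. While $S$ is nonempty, I run the while loop of \cref{alg:quickselect-schulze} with starting set $S$ to obtain a maximal element $p$ of the Schulze order restricted to $S$, report $p$, use \cref{lem:schulze-pivot} once more to compute the set $\{v: v\,||\,p\}$ of \emph{all} vertices incomparable to $p$, and replace $S$ by its intersection with this set. As emphasized for \cref{alg:quickselect-schulze}, all comparisons inside each pivoting round are computed by \cref{lem:schulze-pivot} in the full graph $G_{M,N,P}$, never in a subgraph, so the beatpath strengths and hence the Schulze order used throughout are the true ones.

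The correctness rests on the invariant that, after reporting maximal elements $p_1,\dots,p_j$, the set $S$ equals $\{v : v\,||\,p_i \text{ for all } i\le j\}$, and that the maximal elements of the Schulze order restricted to $S$ are exactly the maximal elements of the full order that have not yet been reported. One direction is immediate: distinct maximal elements of a partial order form an antichain, so any full-order maximal element other than $p_1,\dots,p_j$ is incomparable to each of them, hence lies in $S$ and is maximal there. For the converse --- that any maximal element $q$ of the restriction to $S$ is in fact maximal in the whole order --- I would argue by transitivity: if some $z$ satisfied $q<z$ in the full order, then for each $i\le j$ we cannot have $z<p_i$ (else $q<p_i$, contradicting $q\,||\,p_i$) nor $p_i<z$ (else $p_i$ would not be maximal), so $z\,||\,p_i$ for every $i$ and thus $z\in S$, contradicting maximality of $q$ in $S$. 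Consequently each outer iteration reports a genuinely new maximal element, and none of the remaining maximal elements is ever accidentally removed.

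For termination and the count of iterations, I would note that every vertex is below or equal to some maximal element of the order, so once all $k$ maximal elements have been reported the set $S=\{v: v\,||\,p_i \text{ for all } i\}$ is empty; conversely $S$ is nonempty while any maximal element remains unreported. Hence the outer loop executes exactly $k$ times. Each execution runs one quickselect phase on a set of size at most $m$, which by \cref{lem:iter-size} and its consequences uses $O(\log m)$ pivoting rounds in expectation, each a whole-graph beatpath computation costing $O(m^2)$ by \cref{lem:schulze-pivot}, plus one additional $O(m^2)$ call to form the incomparability set. By linearity of expectation the total expected time is $O(k m^2\log m)$.

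The main obstacle I anticipate is the correctness invariant rather than the running time: the transitivity argument above is what guarantees that restricting attention to the incomparability set neither loses a maximal element nor promotes a non-maximal one, and it must be verified in both directions. A secondary point, already encountered for \cref{alg:quickselect-schulze}, is that the incomparability set $\{v: v\,||\,p\}$ must be computed against the entire vertex set using a fresh call to \cref{lem:schulze-pivot}, not reconstructed from the last pivoting round of the quickselect phase, since that round's partition is taken over a strict subset of $S$ and may omit relevant vertices.
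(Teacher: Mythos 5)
Your proposal is correct and matches the paper's proof essentially verbatim: both peel off maximal elements one at a time by running \cref{alg:quickselect-schulze} on the surviving set, then use one fresh whole-graph call to \cref{lem:schulze-pivot} to restrict to the elements incomparable to the newly found maximal element, with the same invariant and the same $O(k m^2\log m)$ accounting (your transitivity argument just spells out what the paper leaves as ``by induction''). The only nit is that you should explicitly remove $p$ itself when forming the new working set (the paper writes $I\setminus\{p\}$), since under the paper's literal definition of incomparability $p\,\|\,p$ holds and your termination claim needs $p$ gone.
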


\begin{proof}
Apply the following algorithm.

\begin{algorithm}[H]
\caption{All maximal elements of the Schulze order}
\begin{enumerate}
\item Set $C$ to be the set of all candidates.
\item While $C$ is not empty:
\begin{enumerate}
\item Use \cref{alg:quickselect-schulze}, with step 1 modified to set $S=C$ rather than setting it to the set of all vertices, to find a maximal element $p$ of $C$, and output $p$.
\item Use \cref{lem:schulze-pivot} to partition $C$ into the three subsets of elements $L=\{x: x<p\}$, $I=\{x:x||p\}$, and $H=\{x:p<x\}.$ By the maximality of $p$, $H$ will be empty.
\item Set $C$ to $I\setminus\{p\}$, the subset of elements not less than or equal to $p$.
\end{enumerate}
\end{enumerate}
\end{algorithm}

By induction, the set $C$ remaining after each iteration of the outer loop is the subset of candidates that are not less than or equal to any of the maximal elements found so far. Therefore, any maximal element in $C$ is a maximal element of the whole Schulze order, and all maximal elements will have been found when $C$ becomes empty. Each iteration of the while loop produces a single maximal element and takes time $O(m^2\log m)$ by the analysis of \cref{alg:quickselect-schulze}, which remains valid for the modified initial choice of $S$.
\end{proof}

\section{Barriers to improvement}
\label{sec:barriers}

Several natural directions for extension, generalization, or speedup of our algorithm are blocked by the counterexamples that we present in this section.

\subsection{Tiebreak order}

Ties can occur in the Schulze method only when two head-to-head contests would have equal margins of victory (\cref{lem:uniqueness}), unlikely when the number of voters is large enough to cause large statistical fluctuations in these numbers. Nevertheless, in practical applications of the Schulze method, ties must be disambiguated by some tie-breaking method. Schulze~\cite{Sch-VM-03} suggests using a randomly-drawn ballot (or sequence of ballots) to determine an ordering among the maximal elements of the Schulze order, without changing the order relations already determined by this order. This can be performed by using \cref{thm:all-maximal} to list all maximal candidates, and then applying Schulze's random balloting procedure to this list. However, compared to using \cref{alg:quickselect-schulze}, this would incur a time penalty proportional to the number of candidates listed. It is natural to hope that the ordering from a random ballot could somehow be incorporated into the faster algorithm of \cref{alg:quickselect-schulze}, for instance by basing it on a partial order that refines the Schulze ordering using the random ballot order. However, as the example below shows, any such refinement cannot be based purely on local information (the Schulze order comparison between a pair of values and their relative position on the random ballot).

\begin{example}
Consider an election with three candidates $a$, $b$, and $c$, and with four voters with preferences $a>b>c$, $c>a>b$, $a>c>b$, and $b>c>a$. Then in the weighted majority graph, shown in \cref{fig:example-1-weighted-majority}, all edges into and out of $c$ have weight zero, while the edge from $a$ to $b$ has weight 2 and its reverse has weight $-2$. The beatpath strengths are the same except that the beatpath strength from $b$ to $a$ is zero (by a path through $c$), shown in \cref{tab:example-1-beatpaths}. Thus, $a$ and $c$ are the two maximal elements. The only comparable pair in the Schulze order is $a>b$; the other two pairs are incomparable.

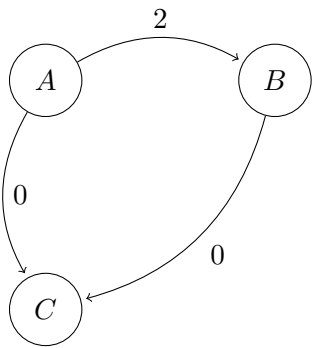
\begin{figure}[H]
\centering
\begin{tikzpicture} [shorten >= 2pt,node distance=1.2in,auto]
    \node[state] (A)  {$A$};
    \node[state] (B) [right of=A] {$B$};
    \node[state] (C) [below of=A] {$C$};
    \path[->]   
                (A) edge [bend left] node {$2$} (B)
                (A) edge [bend right] node {$0$} (C)
                (B) edge [bend left] node {$0$} (C);
\end{tikzpicture}
\caption{Weighted Majority Graph for Example 1}
\label{fig:example-1-weighted-majority}
\end{figure}

\begin{table}[H]
    \centering
    \begin{tabular}{c||ccc}
         & A & B & C \\
         \hline\hline
        A & --- & 2 & 0 \\
        B & 0 & --- & 0 \\
        C & 0 & 0 & --- \\
    \end{tabular}
    \caption{Beatpaths for Example 1}
\label{tab:example-1-beatpaths}
\end{table}

Now, suppose that the randomly chosen ballot is the one with the order $b>c>a$. We cannot determine a partial order by using this order as a tiebreaker between any two incomparable elements of the Schulze order, because this would produce the comparisons $a>b$ (not a broken tie), $b>c$ (by the tiebreak rule), and $c>a$ (by the tiebreak rule), giving a cyclic sequence of comparisons that is not allowed in a strict partial order.
\end{example}

Because Schulze's random ballot tiebreaking method is not consistent with the rest of the method (it does not produce a partial order that can be interpreted as a Schulze order for a perturbed ballot count), it may make sense to instead break ties by perturbing the weighted majority graph: randomly order the edges of this graph and, for an edge in position $i$ of the order, add $i/m^2$ to its weight. These perturbations are small enough that they cannot change the ordering among comparable pairs of candidates in the unperturbed Schulze ordering; they can only make an incomparable pair become comparable. All candidates are treated equally by this perturbation method. It causes all edges to have distinct weights, from which by \cref{lem:uniqueness} there can be only one Schulze winner, the unique maximal element of the Schulze order. This winner can be found in $O(m^2\log m)$ expected time by \cref{alg:quickselect-schulze}.

\subsection{Paths in induced subgraphs}

\cref{alg:quickselect-schulze} takes time $O(m^2\log m)$, rather than $O(m^2)$ (linear in the size of its input) because in each of the $O(\log n)$ iterations of its outer loop it applies a linear-time subroutine on the entire weighted majority graph. In contrast, the standard quickselect algorithm takes linear time, with the same number of iterations (or recursive calls), because later iterations with fewer elements take less time. It is natural to hope that we could obtain a similar speedup in \cref{alg:quickselect-schulze} by considering only maxmin-weight paths in a smaller subgraph, such as the induced subgraph of the previous pivot and the remaining candidates. As the next example shows, however, restricting to the induced subgraph in this way would produce incorrect results.

\begin{example}
Consider a weighted majority graph with distinct positive edge weights as shown in \ref{fig:example-2-weighted-majority}. In a weighted majority graph, the reversed edges would have negative weights, but as the positive edges form a strongly connected subgraph these negative weights will not be used in any maxmin-weight path. As discussed earlier, every weighted complete directed graph with antisymmetric integer weights, including this graph, is the weighted majority graph for some system of voters and candidates, but we do not explicitly construct ballots that would produce this graph.
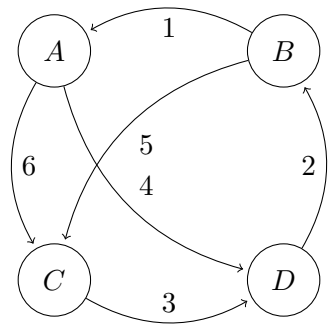
\begin{figure}[H]
\centering
\begin{tikzpicture} [shorten >= 2pt,node distance=1.2in,auto]
    \node[state] (A)  {$A$};
    \node[state] (B) [right of=A] {$B$};
    \node[state] (C) [below of=A] {$C$};
    \node[state] (D) [below of=B] {$D$};
    \path[->]   
                (B) edge [bend right] node {$1$} (A)
                (D) edge [bend right] node {$2$} (B)
                (C) edge [bend right] node {$3$} (D)
                (A) edge [bend right] node {$4$} (D)
                (A) edge [bend right] node {$6$} (C)
                (B) edge [bend right] node {$5$} (C);
\end{tikzpicture}
\caption{Weighted Majority Graph for Example 2}
\label{fig:example-2-weighted-majority}
\end{figure}

Calculating maxmin-weight paths between each two candidates produces the table of beatpath strengths shown below in \cref{tab:example-2-beatpaths}.

\begin{table}[H]
    \centering
    \begin{tabular}{c||cccc}
         & A & B & C & D \\
         \hline\hline
        A & --- & 2 & 6 & 4\\
        B & 1 & --- & 5 & 3\\
        C & 1 & 2 & --- & 3\\
        D & 1 & 2 & 2 & ---\\
    \end{tabular}
        \caption{Beatpaths for Example 2}
\label{tab:example-2-beatpaths}
\end{table}

These strengths produce the Schulze ordering $A > B > C > D$ (a total order), in which $A$ is the unique maximal element and the unique Schulze winner.

Now suppose that \cref{alg:quickselect-schulze} selects $C$ as its first random pivot. If we modified the algorithm to compute subsequent paths in the induced subgraph of  $\{A,B,C\}$ or of $\{A,B\}$, we would still have a path from $B$ to $A$ with maxmin weight 1; however, the removal of $D$ has eliminated all positive paths from $A$ to $B$, leaving the edge from $A$ to $B$ as the maxmin-weight path, with weight $-1$. Thus, the Schulze order for either induced subgraph would have $B>A$, different from the Schulze order of the whole graph. The modified algorithm that finds paths in either induced subgraph would produce an incorrect result.
\end{example}

\subsection{Structure in the Schulze order}

Given our use of quickselect to find a maximal element in the Schulze order, it is natural to hope that a similar variation of quickselect might be used to find elements in intermediate positions in the order, or that a variation of quicksort might be used to determine the entire order. These hopes would be boosted if, for instance, the Schulze order were to turn out to be a weak order, a partition of the candidates into tied sets with a total ordering on those tied sets. Weak orders are commonplace in applications of the standard quicksort and quickselect algorithms (for which the input might be a collection of records with numerical priorities that can be tied) and the algorithms are easily adapted to this case.

Some structure in the Schulze order is provided by \cref{lem:uniqueness}, according to which (for graphs with distinct edge weights) there is a unique maximal element. The same lemma, applied to an edge-reversed version of the same graph, shows also that their Schulze orders have a unique minimal element. However, that is the only structure available in these orders, because every partial order with unique maximal and minimal elements can be realized in this way:

\begin{theorem}
Let $M$ be a set of candidates with a strict partial order $<$ for which there is a unique maximal element $t$ and a unique minimal element $b$. Then there exist voters $N$ and preferences $P$ for which the Schulze order of the weighted majority graph $G_{M,N,P}$ is exactly the given strict partial order.
\end{theorem}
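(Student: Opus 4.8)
The plan is to invoke the McGarvey-style realization described earlier, which expresses any complete directed graph on $M$ with antisymmetric integer weights as a weighted majority graph $G_{M,N,P}$. It therefore suffices to assign antisymmetric integer weights to the complete digraph on $M$ so that the induced beatpath strengths $B$ produce exactly the given order $<$. I would arrange the weights into three tiers around a single threshold $K$. First, a distinguished \emph{gateway} edge from $b$ to $t$ receives weight exactly $K$. Second, for every comparable pair $x<y$ other than the pair $\{b,t\}$, the downward edge $y\to x$ receives a distinct ``wide'' weight greater than $K$, so that every wide edge runs from the larger element to the smaller one. Third, each remaining (incomparable) pair receives a distinct weight of magnitude strictly below $K$, oriented arbitrarily. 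Every reverse edge takes the negation of its forward weight, making the weighting antisymmetric; all magnitudes can be chosen distinct, nonzero, and integral, so the realization even sits inside the distinct-weight regime of \cref{lem:uniqueness}.

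The two facts I would establish are: (i) $B(x,y)\ge K$ for every ordered pair, and (ii) $B(x,y)>K$ if and only if $y<x$. For (i), I would exhibit, for arbitrary $x$ and $y$, the path that descends from $x$ down to $b$ along wide edges, crosses the gateway $b\to t$, and descends from $t$ to $y$ along wide edges; since the gateway is the unique edge of weight $K$ on this path and every other edge is wide, its bottleneck is exactly $K$. This is where uniqueness of $b$ and $t$ is essential: because $b$ is the global minimum every vertex wide-reaches $b$, and because $t$ is the global maximum $t$ wide-reaches every vertex, so the gateway alone makes the weight-$\ge K$ subgraph strongly connected. For (ii), any path with bottleneck strictly above $K$ uses only wide edges, and the reachability relation of the wide edges (all strictly descending in $<$) is exactly the downward closure of $<$; hence such a path from $x$ reaches $y$ iff $y<x$, while conversely a descending chain of wide edges realizes bottleneck $>K$ whenever $y<x$. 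Combining (i) and (ii) yields $B(x,y)=K$ when $y\not<x$ and $B(x,y)>K$ when $y<x$, so $B(x,y)<B(y,x)$ holds precisely when $x<y$, and the Schulze order equals $<$.

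The main obstacle is forcing every incomparable pair to meet the \emph{exact} equality $B(x,y)=B(y,x)$ despite all edge weights being distinct; the single shared gateway is exactly the device that pins both directions of every non-comparable route to the common bottleneck $K$, while the strictly larger wide edges break ties in the correct direction for every comparable pair. One caveat I would flag and dispatch separately: routing $t\rightsquigarrow b$ through wide edges (rather than reusing the gateway's negatively-weighted reverse) requires an intermediate vertex strictly between $b$ and $t$, which exists precisely when $|M|\ge 3$, since with a unique maximum and minimum any third element lies strictly between them. The cases $|M|\le 2$ are trivial; for $|M|=2$ one simply orients the single edge from $t$ to $b$ with positive weight.
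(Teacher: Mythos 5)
Your construction is correct and takes essentially the same approach as the paper: both realize the given order by assigning direct high-weight edges to comparable pairs and routing every other pair through a single lower-weight gateway edge $b\to t$, whose bottleneck value equalizes the beatpath strengths of incomparable pairs in both directions. The only cosmetic differences are that you merge the paper's \emph{large} and \emph{medium} tiers into one \emph{wide} tier and explicitly dispatch the $t\rightsquigarrow b$ routing and the $|M|\le 2$ cases, which the paper leaves implicit.
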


\begin{proof}
$G_{M,N,P}$ must be a complete directed graph on the given candidate set $M$; it remains to assign weights to its edges in order to make the given order be the Schulze order. To do so, we use the following construction:

\begin{example}
Form pools of distinct positive weights to assign to the edges, which we call \emph{large}, \emph{medium}, \emph{small}, and \emph{tiny}, with all weights ordered as $\text{large} > \text{medium} > \text{small} > \text{tiny}$, and with enough distinct weights in each pool to assign to each edge of $G_{M,N,P}$. We will assign these weights in such a way that the comparisons between pairs of edges in the same pool as each other will not affect the resulting Schulze order.
For each candidate $x$ or pair of candidates $x$ and $y$ in $M\setminus\{t,b\}$, we assign these weights to the edges of $G_{M,N,P}$ as follows:
\begin{itemize}
\item We assign a large weight to each edge from $t$ to $x$, and to each edge from $x$ to $b$ (and the negation of the same weight to the opposite edge).
\item If $x > y$, we assign a medium weight to the edge from $x$ to $y$, and the negation of the same weight to the opposite edge.
\item We assign a small weight to the edge from $b$ to $t$, and its negation to the opposite edge.
\item If $x || y$, we assign a tiny weight to one of the two edges between $x$ and $y$, and its negation to the opposite edge.
\end{itemize}
\end{example}

With these weights, $t$ has a large-weight path to every other candidate, and the only positive incoming edge to $t$ has small weight, so $t$ is the (unique by \cref{lem:uniqueness}) maximal element of the Schulze order, matching the given partial order. Symmetrically, $b$ has a large-weight path from every other element, and the only positive outgoing edge from $b$ has small weight, so $b$ is the unique minimal element of the Schulze order, matching the given partial order.

Each remaining pair of candidates $x$ and $y$ can be connected by a path $x$--$b$--$t$--$y$, in which the minimum weight edge is the edge from $b$ to $t$ with small weight. There is no large-weight path between them, and a medium-weight path exists if and only if $x>y$. Thus, the maxmin-weight path has medium weight if $x>y$, and otherwise it has small weight equal to the weight of the edge from $b$ to $t$. When $x > y$ in the given order, this gives $x > y$ in the Schulze order. When $x||y$ in the given order, this gives equal weight to the maxmin-weight paths from $x$ to $y$ and from $y$ to $x$, giving $x||y$ in the Schulze order. Thus in all cases the Schulze order matches the given order, as stated.
\end{proof}

\section{Conclusions}
\label{sec:conclusions}
We have presented a novel algorithm for computing a Schulze winner with an improved time complexity of $O(m^2\log m)$, given as input a weighted majority graph, and a variant of the algorithm that finds all maximal elements of the Schulze order (in case of ties) in expected time $O(km^2\log m)$, compared to a previous $O(m^2\log^4 m)$ time bound \cite{SorVasXu-EC-22}.

As our algorithm is randomized (like the previous algorithm) it is natural to ask for the best time bound of a deterministic algorithm. Additionally, although the examples in \cref{sec:barriers} provide barriers to certain natural directions in which our algorithm might be improved, they still leave open the possibility that an entirely different algorithm might achieve a faster time bound.

\bibliographystyle{plainurl}
\bibliography{bib}

\end{document}